\theoremstyle{plain}
\newtheorem{theorem}{Theorem}[section]
\newtheorem*{theorem*}{Theorem}
\newtheorem*{definition*}{Definition}
\newtheorem*{lemma*}{Lemma}
\newtheorem{corollary}[theorem]{Corollary}
\newcommand{\be}{\begin{eqnarray}}
\newcommand{\ee}{\end{eqnarray}}
\newcommand{\ba}{\begin{array}}
\newcommand{\ea}{\end{array}}
\newcommand{\bmat}{\left(\begin{array}}
\newcommand{\emat}{\end{array}\right)}
\newcommand{\no}{\nonumber}
\title{Some inequalities for correlation functions of Ising models with quenched randomness}
\author{ Manaka Okuyama$^1$\thanks{manaka.okuyama.d2@tohoku.ac.jp} and Masayuki Ohzeki$^{1,2,3}$}
\begin{document}
\maketitle

%%%%%%%%%%%%%%%%%%%%%%%%%%%%%%%%%%%%%%%%%%%%%%%%%%%%%%%%%%%%%%%%%%%%%%%%%%%
%%%%%%%%%%%%%%%%%%%%%%%%%%%%%%%%%%%%%%%%%%%%%%%%%%%%%%%%%%%%%%%%%%%%%%%%%%%
\section{Introduction}

Spin-glass models describe spatially disturbed magnetic material.
While the mean-field theory of spin-glass models was established by the full replica symmetry breaking solution~\cite{Parisi,Parisi2,Guerra,Talagrand}, it is a very difficult problem to understand the property of finite-dimensional systems, except on the Nishimori-line~\cite{Nishimori}.

As a mathematical tool, correlation inequalities such as the Griffiths inequalities and the Fortuin-Kasteleyn-Ginibre inequality have played an essential role in the analysis of ferromagnetic models.
Correlation inequalities also make a critical contribution to the Ising model in a random field~\cite{KTZ}, and thus it is naturally expected that correlation inequalities may have an important role in the analysis of finite-dimensional spin-glass models.

There are some previous studies on correlation inequalities in spin-glass models.
Recent studies~\cite{CG,CL} showed that, when the probability distribution function of random interactions is symmetric,  the counterpart of the Griffiths first inequality holds in the Ising models with quenched randomness.
In addition, it was proved that the counterpart of the Griffiths second inequality holds on the Nishimori-line~\cite{MNC, Kitatani} for various bond randomness which includes the Gaussian distribution and the binary distribution.
On the other hand, a current work~\cite{KNA} found the lower bound on the local energy of the Ising with quenched randomness.
As a consequence, for the Gaussian distribution, it was  shown that the expected value of the square of the correlation function always has a finite lower bound at any temperature~\cite{OO}.

 However, correlation inequalities as in ferromagnetic spin models have not been generally established, and rigorous analysis based on correlation inequalities for spin-glass models has been limited to a few examples~\cite{CS,CG,CG2}.
The purpose of this study is to explore the possibility of correlation inequalities for spin-glass models.
Although the previous studies~\cite{CL,KNA} have been limited to the local energy for a spin set, we extend their results to the local energy for a pair of spin sets.
The key ingredient of the proof is a simple representation of correlation functions.
The organization of the paper is as follows.
In Sec. II, we define the model and gives the simple representation of the correlation functions.
In Sec. III, we obtain some correlation inequalities for symmetric distribution, which is a natural extension of the previous studies~\cite{CL,KNA}.
Section IV is devoted to the case of asymmetric distribution.
Finally, our conclusion is given in Sec. V.

%%%%%%%%%%%%%%%%%%%%%%%%%%%%%%%%%%%%%%%%%%%%%%%%%%%%
%%%%%%%%%%%%%%%%%%%%%%%%%%%%%%%%%%%%%%%%%%%%%%%%%%%%
\section{Ising model with quenched randomness and simple expression of correlation functions}

Following Ref. \cite{KNA}, we consider a generic form of the Ising model,
\be
H&=&- \sum_{A \subset{V}}   J_{A} \sigma_A , \label{def-model}
\\
\sigma_A &\equiv& \prod_{i\in A} \sigma_i ,
\ee
where $V$ is the set of sites, the sum over $A$ is over all the subsets of $V$ in which interactions exist, and the lattice structure adopts any form.
The probability distribution of a random interaction $J_A$ is represented as $P_A (J_A)$.
The probability distributions can be generally different from each other, i.e., $P_A (x)\neq P_{B} (x)$, and are also allowed to present no randomness, i.e., $P_A(J_A)=\delta(J-J_A)$.

The partition function $Z_{\{J_{A}\}}$ and correlation function $\langle\sigma_B \rangle_{\{J_{A}\}}$ for a set of fixed interactions $\{J_{A}\}$ are given by
\be
Z_{\{J_{A}\}}&=&\Tr \exp\left( \beta \sum_{A \subset{V}}   J_{A} \sigma_A \right)
\\
\langle\sigma_B \rangle_{\{J_{A}\}}&=& \frac{ \Tr \sigma_B \exp\left(\beta \sum_{A \subset{V}}  J_{A} \sigma_A \right)}{Z_{\{J_{A}\}}} .
\ee
The configurational average over the distribution of randomness of the interactions is written as 
\be
\mathbb{E}\left[g(\{J_{A}\})  \right] =  \left(\prod_{A \subset{V}} \int_{-\infty}^\infty dJ_A P_A(J_A) \right) g(\{J_{A}\}).
\ee
For example, the expected value of the correlation function is obtained as
\be
\mathbb{E}\left[ \langle\sigma_B \rangle_{\{J_{A}\}} \right]&=&  \left(\prod_{A \subset{V}} \int_{-\infty}^\infty dJ_A P_A(J_A) \right)   \frac{ \Tr \sigma_B \exp\left(\beta \sum_{A \subset{V}}  J_{A} \sigma_A \right)}{Z_{\{J_{A}\}}} .
\ee

When, we focus on $z_B=\exp(\beta J_B)$, we can generally express $\langle\sigma_B \rangle_{\{J_{A}\}}$ as
\be
\langle\sigma_B \rangle_{\{J_{A}\}}&=&\frac{ x z_B - y z_B^{-1}}{ x z_B + y z_B^{-1}},
\ee
where $x$ and $y$ are positive and do not contain $z_B$.
Similarly, when we are interested in two variables $z_B=\exp(\beta J_B)$ and $z_C=\exp(\beta J_C)$, we can generally represent $\langle\sigma_B \rangle_{\{J_{A}\}}$, $\langle\sigma_C \rangle_{\{J_{A}\}}$ and $\langle\sigma_B \sigma_C \rangle_{\{J_{A}\}}$ as 
\be
\langle  \sigma_B   \rangle_{\{J_{A}\}}&=&  \frac{a z_B z_C +b z_B z_C^{-1} -c z_B^{-1} z_C-d z_B^{-1} z_C^{-1} }{a z_B z_C +b z_B z_C^{-1} +c z_B^{-1} z_C+d z_B^{-1} z_C^{-1}  } ,
\\
\langle  \sigma_C   \rangle_{\{J_{A}\}}&=&  \frac{a z_B z_C -b z_B z_C^{-1} +c z_B^{-1} z_C-d z_B^{-1} z_C^{-1} }{a z_B z_C +b z_B z_C^{-1} +c z_B^{-1} z_C+d z_B^{-1} z_C^{-1}  } ,
\\
\langle  \sigma_B  \sigma_C \rangle_{\{J_{A}\}}&=& \frac{a z_B z_C -b z_B z_C^{-1} -c z_B^{-1} z_C+d z_B^{-1} z_C^{-1} }{a z_B z_C +b z_B z_C^{-1} +c z_B^{-1} z_C+d z_B^{-1} z_C^{-1}  } ,
\ee
where $a, b, c$ and $d$ are positive and don not contain $z_B$ and $z_C$.
These simple expression is very useful in the following calculation.

%%%%%%%%%%%%%%%%%%%%%%%%%%%%%%%%%%%%%%%%%%%%%%%%%%%%
%%%%%%%%%%%%%%%%%%%%%%%%%%%%%%%%%%%%%%%%%%%%%%%%%%%%
\section{Correlation inequalities for symmetric distribution}
In this section, we focus on the case that the distribution functions of $J_B$ and $J_C$ are symmetric, 
\be
P_B(-J_B)&=&P_B(J_B),
\\
P_C(-J_C)&=&P_C(J_C).
\ee
We emphasize that we do not impose any constraint on all the other interactions than $J_B$ and $J_C$.

%%%%%%%%%%%%%%%%%%%%%%%%%%%%%%%%%%%%%%%%%%%%%%%%%%%%
%%%%%%%%%%%%%%%%%%%%%%%%%%%%%%%%%%%%%%%%%%%%%%%%%%%%
\subsection{Rederivation of Griffiths first inequality for Ising model with quenched randomness}
First, as a exercise, we reproduce the Griffiths first inequality for Ising model with quenched randomness~\cite{CL},
\be
\mathbb{E}\left[ J_B \langle\sigma_B \rangle_{\{J_{A}\}} \right] \ge0 \label{griffiths-1st}.
\ee
%%%%%%%%%%%%%%%%%%%%%%%%%%%%%%%%%%%%%%%%%%%%%%%%%%%%
\begin{proof}
{\rm
By dividing the integration interval of $J_B$ and summing up them, a simple calculation gives
\be
\mathbb{E} \left[ J_B \langle  \sigma_B \rangle_{\{J_{A}\}} \right]
&=& \int_0^{\infty} dJ_B P_B(J_B)  J_B \mathbb{E} \left[  \frac{2xy (z_B^4-1) }{(x z_B^2 +y)(x +y z_B^2)} \right]' 
\no\\
&\ge&0, \label{simple-cal-1}
\ee
where $\mathbb{E}[\cdots]'$ denotes the configurational average over the randomness of the interactions other than $J_B$.
Thus, we obtain Eq. (\ref{griffiths-1st}).
}
\end{proof}
%%%%%%%%%%%%%%%%%%%%%%%%%%%%%%%%%%%%%%%%%%%%%%%%%%%%
%%%%%%%%%%%%%%%%%%%%%%%%%%%%%%%%%%%%%%%%%%%%%%%%%%%%
\subsection{Rederivation of inequality for local energy for a spin set of Ising model with quenched randomness}
Next, we reproduce the inequality for the local energy for a spin set~\cite{KNA},
\be
&& \mathbb{E} \left[ J_B \tanh(\beta J_B)\right] \ge\mathbb{E} \left[ J_B \langle  \sigma_B \rangle_{\{J_{A}\}} \right] \label{kitatani}.
\ee
\begin{proof}
{\rm
Similarly to the derivation of Eq. (\ref{simple-cal-1}), the simple calculation gives
\be
\mathbb{E} \left[ J_B \tanh(\beta J_B)-J_B \langle  \sigma_B \rangle_{\{J_{A}\}} \right]
&=& \int_0^{\infty} dJ_B P_B(J_B)  J_B \mathbb{E} \left[  \frac{2(x-y)^2 z_B^2 (z_B^2-1) }{(1+z_B^2)(x z_B^2 +y)(x +y z_B^2)} \right]' 
\no\\
&\ge&0.
\ee
Thus, we reach Eq. (\ref{kitatani}).
}
\end{proof}

%%%%%%%%%%%%%%%%%%%%%%%%%%%%%%%%%%%%%%%%%%%%%%%%%%%%
%%%%%%%%%%%%%%%%%%%%%%%%%%%%%%%%%%%%%%%%%%%%%%%%%%%%
\subsection{Inequalities for local energy for a pair of spin sets }
We have derived the inequalities for the local energy for a spin set so far.
Our first main result in this study is the following theorem.
%%%%%%%%%%%%%%%%%%%%%%%%%%%%%%%%%%%%%
\begin{theorem}\label{th1}
The quenched average of the local energy for a pair of spin sets, which is defined as $\mathbb{E}\left[ J_B J_C  \langle\sigma_B \sigma_C \rangle_{\{J_{A}\}} \right]$, is always positive, 
\be
\mathbb{E}\left[ J_B J_C  \langle\sigma_B \sigma_C \rangle_{\{J_{A}\}} \right]\ge0 \label{Griffiths-ex} .
\ee
\end{theorem}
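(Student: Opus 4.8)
The plan is to reduce the claim to a pointwise inequality for the integrand and then settle that by elementary algebra. First I would write
\[
\mathbb{E}\left[ J_B J_C  \langle\sigma_B \sigma_C \rangle_{\{J_{A}\}} \right]=\mathbb{E}\left[\int_{-\infty}^\infty\! dJ_B\,P_B(J_B)\int_{-\infty}^\infty\! dJ_C\,P_C(J_C)\,J_B J_C\,\langle\sigma_B\sigma_C\rangle_{\{J_A\}}\right]'',
\]
where $\mathbb{E}[\cdots]''$ denotes the configurational average over all interactions other than $J_B$ and $J_C$. Since $a,b,c,d$ are positive and independent of $z_B,z_C$, it suffices to prove that the inner $J_BJ_C$-integral is nonnegative for each fixed realization of the remaining bonds. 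Using $P_B(-J_B)=P_B(J_B)$ and $P_C(-J_C)=P_C(J_C)$, I would fold the four sign quadrants of $(J_B,J_C)$ onto the positive quadrant by $J_B\to-J_B$ ($z_B\to z_B^{-1}$) and $J_C\to-J_C$ ($z_C\to z_C^{-1}$). Because the prefactor $J_BJ_C$ changes sign under a single flip, this produces
\[
\int_0^\infty\!\!\int_0^\infty\! dJ_B\,dJ_C\,P_BP_C\,J_BJ_C\,B,\qquad B\equiv g(z_B,z_C)-g(z_B^{-1},z_C)-g(z_B,z_C^{-1})+g(z_B^{-1},z_C^{-1}),
\]
with $g$ the rational expression for $\langle\sigma_B\sigma_C\rangle_{\{J_A\}}$ given in Sec.~II. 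As $J_BJ_C\ge0$ and $P_B,P_C\ge0$ on this domain, everything reduces to the pointwise bound $B\ge0$ for $z_B,z_C\ge1$.

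To handle $B$ I would set $u=z_B^2\ge1$ and $v=z_C^2\ge1$ and clear the $z$'s, so that each of the four terms becomes a ratio of the type $(auv-bu-cv+d)/(auv+bu+cv+d)$ with the appropriate placement of $u\leftrightarrow u^{-1}$, $v\leftrightarrow v^{-1}$. Writing each term as $1-2R_i/Q_i$, with $Q_i>0$ the four denominators and $R_i\in\{bu+cv,\;cuv+b,\;buv+c,\;cu+bv\}$, collapses $B$ to $2\big(\tfrac{R_2}{Q_2}+\tfrac{R_3}{Q_3}-\tfrac{R_1}{Q_1}-\tfrac{R_4}{Q_4}\big)$. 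Pairing $\big(\tfrac{R_2}{Q_2}-\tfrac{R_4}{Q_4}\big)$ with $\big(\tfrac{R_3}{Q_3}-\tfrac{R_1}{Q_1}\big)$ and expanding the two numerators, I expect each to carry a factor $(v^2-1)$, namely $R_2Q_4-R_4Q_2=(v^2-1)(cd\,u^2-ab)$ and $R_3Q_1-R_1Q_3=(v^2-1)(ab\,u^2-cd)$, giving
\[
B=\frac{2(v^2-1)}{Q_1Q_2Q_3Q_4}\big[(cd\,u^2-ab)Q_1Q_3+(ab\,u^2-cd)Q_2Q_4\big].
\]

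The remaining bracket should in turn be divisible by $(u^2-1)$: splitting $cd\,u^2-ab=cd(u^2-1)+(cd-ab)$ and $ab\,u^2-cd=ab(u^2-1)-(cd-ab)$ isolates the manifestly nonnegative piece $(u^2-1)(cd\,Q_1Q_3+ab\,Q_2Q_4)$ plus $(cd-ab)(Q_1Q_3-Q_2Q_4)$, and I would verify the identity $Q_1Q_3-Q_2Q_4=(u^2-1)\big[(ab-cd)(v^2+1)+(a^2+b^2-c^2-d^2)v\big]$, which makes the whole bracket divisible by $(u^2-1)$. After factoring out $(u^2-1)\ge0$, nonnegativity of $B$ reduces to the single polynomial inequality
\[
G:=cd\,Q_1Q_3+ab\,Q_2Q_4-(ab-cd)\big[(ab-cd)(v^2+1)+(a^2+b^2-c^2-d^2)v\big]\ge0.
\]
Here I would use that $Q_1Q_3$ and $Q_2Q_4$ are increasing in $u$ for $u\ge1$ while the subtracted term is $u$-independent, so $G$ is minimized at $u=1$. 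At $u=1$ one has $Q_1=Q_2$ and $Q_3=Q_4$, whence
\[
G\big|_{u=1}=\big[4abcd+(ab+cd)(ad+bc)\big](v^2+1)+2\big[ab(c^2+d^2)+cd(a^2+b^2)+(ab+cd)(ac+bd)\big]v,
\]
a sum of strictly positive terms for $v\ge1$. This yields $B=2(u^2-1)(v^2-1)G/(Q_1Q_2Q_3Q_4)\ge0$ and hence Eq.~(\ref{Griffiths-ex}).

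The conceptual steps—folding by symmetry and the $1-2R_i/Q_i$ rewriting—are straightforward; the main obstacle is the purely computational bookkeeping needed to make the two factorizations $(v^2-1)$ and $(u^2-1)$ emerge cleanly, together with the observation that the residual $G$ is monotone in $u$ so that the whole problem collapses to the transparent $u=1$ slice. The critical identities to get right are the two expressions for $R_2Q_4-R_4Q_2$ and $R_3Q_1-R_1Q_3$ and the factorization of $Q_1Q_3-Q_2Q_4$; a single sign error there destroys the cancellations, so I would confirm each by expanding both sides in full.
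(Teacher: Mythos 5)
Your proposal is correct, and its opening move is exactly the paper's: fold the four sign quadrants of $(J_B,J_C)$ onto the positive quadrant using the symmetry of $P_B$ and $P_C$, so that everything reduces to pointwise nonnegativity of the antisymmetrized combination $B=g(z_B,z_C)-g(z_B^{-1},z_C)-g(z_B,z_C^{-1})+g(z_B^{-1},z_C^{-1})$ for $z_B,z_C\ge1$, with $a,b,c,d>0$ fixed by the remaining bonds. Where you diverge is in how that pointwise bound is certified. The paper simply displays the fully expanded result, $2(z_B^4-1)(z_C^4-1)$ times a ratio whose numerator is written out as a polynomial in $a,b,c,d,z_B,z_C$ with manifestly nonnegative coefficients, so positivity is read off term by term but the reader must trust (or redo) a large expansion. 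You instead keep the structure: the rewriting $g=1-2R_i/Q_i$, the two pair identities $R_2Q_4-R_4Q_2=(v^2-1)(cd\,u^2-ab)$ and $R_3Q_1-R_1Q_3=(v^2-1)(ab\,u^2-cd)$, and the factorization $Q_1Q_3-Q_2Q_4=(u^2-1)[(ab-cd)(v^2+1)+(a^2+b^2-c^2-d^2)v]$ extract the factors $(v^2-1)$ and $(u^2-1)$ cleanly, and the residual $G$ is handled by monotonicity in $u$ plus the explicit evaluation at $u=1$, which is a short sum of visibly positive terms. I checked your four key identities and the $u=1$ evaluation ($G|_{u=1}=[4abcd+(ab+cd)(ad+bc)](v^2+1)+2[ab(c^2+d^2)+cd(a^2+b^2)+(ab+cd)(ac+bd)]v$); they are all correct, and the monotonicity step is sound since $cd\,Q_1Q_3+ab\,Q_2Q_4$ is non-decreasing in $u$ while the subtracted term is $u$-independent. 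The net effect is the same inequality, but your organization replaces the paper's unverifiable-at-a-glance expansion with a checkable chain of small identities, at the cost of a slightly longer argument; it also makes transparent \emph{why} the factors $(z_B^4-1)(z_C^4-1)$ appear in the paper's formula.
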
 
%%%%%%%%%%%%%%%%%%%%%%%%%%%%%%%%%%%%%
%%%%%%%%%%%%%%%%%%%%%%%%%%%%%%%%%%%%%
\begin{proof}
{\rm
By dividing the integration interval of $J_B$ and $J_C$ and summing up them, a straightforward calculation provides
\be
&&   \mathbb{E} \left[  J_B J_C\langle  \sigma_B  \sigma_C \rangle_{\{J_{A}\}}\right]
\no\\
&=&\int_0^{\infty} dJ_B P_B(J_B) \int_0^{\infty} dJ_C  P_C(J_C) J_B J_C  2(z_B^4-1)(z_C^4-1) 
\no\\
&&\mathbb{E} \left[  \frac{bcd((b+2d z_B^2+b z_B^4) z_C^2+ c z_B^2(1+z_C^4) )+a^2(cd (1+z_B^4)z_C^2 +b z_B^2(d+2c z_C^2+dz_C^4))}
{(a z_B^2 z_C^2 +b z_B^2  +c  z_C^2+d )(c z_B^2 z_C^2 +d z_B^2  +a  z_C^2+b)(b z_B^2 z_C^2 +a z_B^2  +d  z_C^2+c )(d z_B^2 z_C^2 +c z_B^2  +b  z_C^2+a) } \right.
\no\\
&&+ \left.  \frac{a(b^2z_B^2(c+2dz_C^2+cz_C^4) +cd z_B^2(d+2c z_C^2+dz_C^4)+b(1+z_B^4)(c^2z_C^2+d^2z_C^2+2cd(1+z_C^4)))}{(a z_B^2 z_C^2 +b z_B^2  +c  z_C^2+d )(c z_B^2 z_C^2 +d z_B^2  +a  z_C^2+b)(b z_B^2 z_C^2 +a z_B^2  +d  z_C^2+c )(d z_B^2 z_C^2 +c z_B^2  +b  z_C^2+a)  } \right]'' ,\label{simple-cal-2}
\no\\
\ee
where $\mathbb{E}[\cdots]''$ denotes the configurational average over the randomness of the interactions other than $J_B$ and $J_C$.
Therefore, we obtain Eq. (\ref{Griffiths-ex}).
}
\end{proof}

%%%%%%%%%%%%%%%%%%%%%%%%%%%%%%%%%%%%%%%%%%%%%%%%%%%%
  Inequality (\ref{Griffiths-ex}) can be regarded as a natural extension of the Griffiths first inequality (\ref{griffiths-1st}) to the local energy for a pair of spin sets.
We note that, in the case of the Gaussian distribution with the variance $\Lambda_B^2$ , Eq. (\ref{Griffiths-ex}) can be proved directly by integration by parts, 
\be
&&\mathbb{E}\left[ J_B J_C  \langle\sigma_B \sigma_C \rangle_{\{J_{A}\}} \right]
\no\\
&=&\Lambda_B^2 \Lambda_C^2 \mathbb{E} \left[ 1- \langle   \sigma_C \rangle_{\{J_{A}\}}^2 - \langle  \sigma_B  \rangle_{\{J_{A}\}}^2 -\langle  \sigma_B  \sigma_C \rangle_{\{J_{A}\}}^2+ 2\langle  \sigma_B  \sigma_C \rangle_{\{J_{A}\}} \langle  \sigma_B  \rangle_{\{J_{A}\}} \langle  \sigma_C  \rangle_{\{J_{A}\}} \right]
\no\\
&\ge&0,
\ee
where we used the following inequality 
\be
1- \langle   \sigma_C \rangle_{\{J_{A}\}}^2 - \langle  \sigma_B  \rangle_{\{J_{A}\}}^2 -\langle  \sigma_B  \sigma_C \rangle_{\{J_{A}\}}^2+ 2\langle  \sigma_B  \sigma_C \rangle_{\{J_{A}\}} \langle  \sigma_B  \rangle_{\{J_{A}\}} \langle  \sigma_C  \rangle_{\{J_{A}\}}\ge0 ,\label{general-ineq}
\ee
which generally holds in the Ising models (see Appendix for details of the proof).

%%%%%%%%%%%%%%%%%%%%%%%%%%%%%%%%%%%%%%%%%%%%%%%%%%%%
%\subsection{Extension of inequality for  local energy to two interactions}

Our second main result is the following inequality which gives the opposite bound for the quenched average of the local energy for a pair of spin sets.
%%%%%%%%%%%%%%%%%%%%%%%%%%%%%%%%%%%%%%%%%%%%%%%%%%%%
\begin{theorem}\label{th1}
For $\sigma_B\neq \sigma_C$, the quenched average of the local energy for a pair of spin sets is bounded from above,
\be
\mathbb{E} \left[ J_B J_C \tanh(\beta J_B)\tanh(\beta J_C) \right] \ge \mathbb{E} \left[ J_B J_C \langle  \sigma_B  \sigma_C \rangle_{\{J_{A}\}} \right].\label{two-kitatani}
\ee
\end{theorem}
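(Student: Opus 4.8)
The plan is to follow the same route as the proofs of~(\ref{griffiths-1st}) and~(\ref{kitatani}): express the integrand through the simple two-variable representation, split the $J_B$- and $J_C$-integrals into their positive and negative halves, and symmetrize using $P_B(-J_B)=P_B(J_B)$ and $P_C(-J_C)=P_C(J_C)$. Under $J_B\to -J_B$ one has $z_B\to z_B^{-1}$ and $\tanh(\beta J_B)\to-\tanh(\beta J_B)$, and similarly for $C$; the sign picked up by the prefactor $J_BJ_C$ exactly compensates these flips, so the four sign sectors fold onto the single quadrant $J_B,J_C>0$. Before that I would record that the hypothesis $\sigma_B\neq\sigma_C$ is precisely what makes $a,b,c,d>0$: grouping the residual Boltzmann weight according to the four sign patterns of $(\sigma_B,\sigma_C)$ produces these four coefficients, and all four patterns are realizable exactly when $\sigma_B\neq\sigma_C$. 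The hypothesis is also necessary, since for $\sigma_B=\sigma_C$ one has $\langle\sigma_B\sigma_C\rangle_{\{J_A\}}=1$ and~(\ref{two-kitatani}) would reduce to $\mathbb{E}[J_B^2\tanh^2(\beta J_B)]\ge\mathbb{E}[J_B^2]$, which is false.

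Carrying out the fold, the difference of the two sides of~(\ref{two-kitatani}) becomes
\[ \int_0^\infty dJ_B\,P_B(J_B)\int_0^\infty dJ_C\,P_C(J_C)\,J_BJ_C\,\mathbb{E}\left[\,C\,\right]'' , \]
where, writing $D_1,\dots,D_4$ for the four positive denominators appearing in order in~(\ref{simple-cal-2}), one finds
\[ C=\frac{4(z_B^2-1)(z_C^2-1)}{(z_B^2+1)(z_C^2+1)}+2\left(\frac{b z_B^2+c z_C^2}{D_1}-\frac{c z_B^2 z_C^2+b}{D_2}-\frac{b z_B^2 z_C^2+c}{D_3}+\frac{c z_B^2+b z_C^2}{D_4}\right). \]
Since $z_B^2,z_C^2\ge1$ on this quadrant, the prefactor and the measures are non-negative, so it suffices to prove $C\ge0$ for $z_B^2,z_C^2\ge1$ and $a,b,c,d>0$. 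Putting $C$ over the common denominator $(z_B^2+1)(z_C^2+1)D_1D_2D_3D_4$ reduces the claim to the non-negativity of a single polynomial numerator in $z_B^2,z_C^2,a,b,c,d$.

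I expect the exhibition of this numerator as a manifestly non-negative expression to be the main obstacle. Two structural facts guide the factorization and make me confident it is non-negative. First, $C$ vanishes identically on $z_B^2=1$ and on $z_C^2=1$ (each single inversion leaves a cancelling pair of terms), so $(z_B^2-1)(z_C^2-1)$ divides the numerator; this is the analogue of the $(z_B^4-1)(z_C^4-1)$ factor already visible in~(\ref{simple-cal-2}), and it controls the sign. Second, $C$ vanishes when $a=b=c=d$, where $\langle\sigma_B\sigma_C\rangle_{\{J_A\}}$ collapses to $\tanh(\beta J_B)\tanh(\beta J_C)$, so the remaining factor must be built from differences of $a,b,c,d$. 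Indeed, in the representative case $a=d$, $b=c$ (so that $D_1=D_4$ and $D_2=D_3$) the expression collapses to
\[ C=\frac{4(z_B^2-1)(z_C^2-1)(z_B^2 z_C^2+1)(z_B^2+z_C^2)(a-b)^2}{(z_B^2+1)(z_C^2+1)\,D_1D_2}\ge0 , \]
a product of $(z_B^2-1)(z_C^2-1)$, a positive monomial, and a square.

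The plan is therefore to expand the general numerator and, as in the derivation of~(\ref{kitatani}), reorganize it as $(z_B^2-1)(z_C^2-1)$ times a sum of terms of the form (positive monomial in $z_B^2,z_C^2$) $\times$ (square of a linear combination of $a,b,c,d$); this reorganization carries the bulk of the algebra and is where a computer-algebra verification is most useful. As an independent consistency check I would treat the Gaussian case directly by the same integration by parts used for~(\ref{Griffiths-ex}); this should recast the left-hand side of~(\ref{two-kitatani}) as $\Lambda_B^2\Lambda_C^2$ times a manifestly non-negative combination of $\langle\sigma_B\rangle_{\{J_A\}}$, $\langle\sigma_C\rangle_{\{J_A\}}$ and $\langle\sigma_B\sigma_C\rangle_{\{J_A\}}$, confirming~(\ref{two-kitatani}) on the Gaussian subfamily.
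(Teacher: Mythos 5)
Your setup is correct and coincides with the paper's: the fold onto the quadrant $J_B,J_C>0$ using the symmetry of $P_B,P_C$, the identification of the four denominators $D_1,\dots,D_4$, and the resulting integrand $C$ (your displayed formula for $C$ checks out, as does the observation that $\sigma_B\neq\sigma_C$ is what guarantees $a,b,c,d>0$ and is necessary for the statement). The problem is that you stop exactly where the theorem begins. Everything up to ``it suffices to prove $C\ge0$ for $z_B^2,z_C^2\ge1$ and $a,b,c,d>0$'' is the same mechanical reduction already used for Eqs.~(\ref{griffiths-1st}) and~(\ref{kitatani}); the entire content of Theorem~3.2 is the positivity of $C$, and you establish it only in the special case $a=d$, $b=c$, deferring the general case to an expansion ``where a computer-algebra verification is most useful.'' This is a genuine gap, not a routine omission: the analogous three-set quantity~(\ref{Kitatani-3body}) is reduced by the identical fold to a formally similar rational expression whose sign is \emph{not} definite (as the paper reports numerically), so nothing about the structure you describe forces $C\ge0$; the inequality lives entirely in the algebra you have not done.

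Two further cautions on the plan itself. First, your conjectured normal form --- $(z_B^2-1)(z_C^2-1)$ times a sum of positive monomials in $z_B^2,z_C^2$ each multiplying a \emph{square of a linear combination} of $a,b,c,d$ --- does not match what actually comes out. The paper's proof extracts the factor $(z_B^2-1)(z_C^2-1)$ and then exhibits the remaining numerator $B$ as a polynomial in $z_B^2,z_C^2$ whose coefficients are non-negative, but those coefficients are quartic AM--GM-type expressions such as $ab^2c+a^2bd-4abcd+bc^2d+acd^2=(ab+cd)(bc+ad)-4abcd$ and $(b^2+d^2)(a^2+c^2-ac)-bd(a^2+c^2)$, not sums of monomial-weighted squares of linear forms; committing to the wrong target shape could stall the reorganization. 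Second, your Gaussian consistency check is genuinely only a check on a subfamily and cannot substitute for the general argument, since the theorem is claimed for arbitrary symmetric $P_B,P_C$ (e.g.\ the $\pm J$ distribution), for which integration by parts is unavailable. To close the proof you must either reproduce the explicit expansion of $B$ and verify the non-negativity of each of its eight coefficient polynomials, or find a slicker positivity certificate for $C$; neither is present in the proposal.
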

%%%%%%%%%%%%%%%%%%%%%%%%%%%%%%%%%%%%%%%%%%%%%%%%%%%%
This inequality has a clear physical meaning: the quenched average of the local energy for a pair of spin sets is always lower than or equal to the energy in the absence of all the other interactions.
Equation (\ref{two-kitatani}) is a natural extension of Eq. (\ref{kitatani}) to a pair of spin sets.
%%%%%%%%%%%%%%%%%%%%%%%%%%%%%%%%%%%%%%%%%%%%%%%%%%%%
\begin{proof}
{\rm
Similarly to the derivation of Eq. (\ref{simple-cal-2}), the straightforward but tedious calculation gives
\be
&&\mathbb{E} \left[ J_B J_C \tanh(\beta J_B)\tanh(\beta J_C)- J_B J_C \langle  \sigma_B  \sigma_C \rangle_{\{J_{A}\}} \right]
\no\\
&=& \int_0^{\infty} dJ_B P_B(J_B) \int_0^{\infty} dJ_C  P_C(J_C)  2J_B J_C  (z_B^2-1)(z_C^2-1) \mathbb{E} \left[  \frac{B}
{A} \right]'' 
\ee
where $A$ and $B$ are defined as
\be
A&=&(1+z_B^2)(1+z_C^2)(d+bz_B^2+c z_C^2 +a z_B^2 z_C^2) 
\no\\
&&(c+a z_B^2+d z_C^2 +b z_B^2 z_C^2) (b+d z_B^2+a z_C^2 +c z_B^2 z_C^2) (a+c z_B^2+b z_C^2 +d z_B^2 z_C^2) ,
\\
B&=&2(a^2b^2- ab^2c +b^2c^2-a^2bd-bc^2d+a^2d^2-acd^2+c^2d^2)(z_B^4+z_B^4 z_C^8) 
\no\\
&&+2(a^2c^2- abc^2 +b^2c^2-a^2cd-b^2cd+a^2d^2-abd^2+b^2d^2)(z_C^4+z_C^4 z_B^8)
\no\\
&&+(2a^4+8a^2b^2+2b^4-8a^2bc-4ab^2c+8a^2c^2-4abc^2+4b^2c^2+2c^4-4a^2bd-8ab^2d-4a^2cd
\no\\
&&+16abcd-4b^2cd-8ac^2d-4bc^2d+4a^2d^2-4abd^2+8b^2d^2-4acd^2-8bcd^2+8c^2d^2+2d^4)z_B^4z_C^4
\no\\
&&+(ab^2c+a^2bd-4abcd+bc^2d+acd^2)(z_B^2+z_B^6+z_B^2z_C^8+z_B^6z_C^8)
\no\\
&&+(abc^2+a^2cd-4abcd+b^2cd+abd^2)(z_C^2+z_C^6+z_B^8z_C^2+z_B^8z_C^6)
\no\\
&&+(4a^3b+4ab^3-4a^2bc-4ab^2c+6abc^2-4a^2bd-4ab^2d+6a^2cd-8abcd+6b^2cd
\no\\
&&-4ac^2d-4bc^2d+4c^3d+6abd^2-4acd^2-4bcd^2+4cd^3)(z_B^4z_C^2+z_B^4z_C^6)
\no\\
&&+(4a^3c-4a^2bc+6ab^2c-4abc^2+4ac^3+6a^2bd-4ab^2d+4b^3d-4a^2cd-8abcd
\no\\
&&-4b^2cd-4ac^2d+6bc^2d-4abd^2+6acd^2-4bcd^2+4bd^3)(z_B^2z_C^4+z_B^6z_C^4)
\no\\
&&+(4a^2bc-2ab^2c+2b^3c-2abc^2+2bc^3+2a^3d-2a^2bd+4ab^2d-2a^2cd-8abcd
\no\\
&&-2b^2cd+4ac^2d-2bc^2d-abd^2-2acd^2+4bcd^2+2ad^3)(z_B^2z_C^2+z_B^6z_C^2+z_B^2z_C^6+z_B^6z_C^6).
\ee
$A$ is obviously positive.
Furthermore, all of the terms in B are positive, because
\be
&&2(a^2b^2- ab^2c +b^2c^2-a^2bd-bc^2d+a^2d^2-acd^2+c^2d^2)\ge0,
\no\\
&&2(a^2c^2- abc^2 +b^2c^2-a^2cd-b^2cd+a^2d^2-abd^2+b^2d^2)\ge0,
\no\\
&&(2a^4+8a^2b^2+2b^4-8a^2bc-4ab^2c+8a^2c^2-4abc^2+4b^2c^2+2c^4-4a^2bd-8ab^2d-4a^2cd
\no\\
&&+16abcd-4b^2cd-8ac^2d-4bc^2d+4a^2d^2-4abd^2+8b^2d^2-4acd^2-8bcd^2+8c^2d^2+2d^4)\ge0,
\no\\
&&(ab^2c+a^2bd-4abcd+bc^2d+acd^2)\ge0,
\no\\
&&(abc^2+a^2cd-4abcd+b^2cd+abd^2)\ge0,
\no\\
&&(4a^3b+4ab^3-4a^2bc-4ab^2c+6abc^2-4a^2bd-4ab^2d+6a^2cd-8abcd+6b^2cd
\no\\
&&-4ac^2d-4bc^2d+4c^3d+6abd^2-4acd^2-4bcd^2+4cd^3)\ge0,
\no\\
&&(4a^3c-4a^2bc+6ab^2c-4abc^2+4ac^3+6a^2bd-4ab^2d+4b^3d-4a^2cd-8abcd
\no\\
&&-4b^2cd-4ac^2d+6bc^2d-4abd^2+6acd^2-4bcd^2+4bd^3)\ge0,
\no\\
&&(4a^2bc-2ab^2c+2b^3c-2abc^2+2bc^3+2a^3d-2a^2bd+4ab^2d-2a^2cd-8abcd
\no\\
&&-2b^2cd+4ac^2d-2bc^2d-abd^2-2acd^2+4bcd^2+2ad^3)\ge0.
\ee
Therefore, we obtain Eq. (\ref{two-kitatani}).
}
\end{proof}

Furthermore, when we consider the Gaussian distribution as the distribution of the interactions, by integration by parts, Eq. (\ref{two-kitatani}) can be rewritten as follows.
%%%%%%%%%%%%%%%%%%%%%%%%%%%%%%%%%%%%%%%%%%%%%%%%%%%%
\begin{corollary} 
We consider the case where the interactions $J_B$ and $J_C$ follow the Gaussian distributions with the variance $\Lambda_B^2$ and $\Lambda_C^2$, respectively.
Then, for $\sigma_B\neq \sigma_C$, the following relation holds,
\be
 \mathbb{E} \left[ \langle  \sigma_B  ;  \sigma_C  \rangle_{\{J_{A}\}}^2 \right]\ge \mathbb{E} \left[(1- \langle  \sigma_B  \rangle_{\{J_{A}\}}^2 )(1- \langle  \sigma_C  \rangle_{\{J_{A}\}}^2 )-\left(1-\tanh^2(\beta J_B)  \right)\left(1-\tanh^2(\beta J_C)  \right) \right] .
\ee
\end{corollary}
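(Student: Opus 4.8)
The plan is to obtain the corollary directly from inequality~(\ref{two-kitatani}) by specializing the disorder to the Gaussian case and applying Gaussian integration by parts (the Stein identity). For a centered Gaussian variable $J$ with variance $\Lambda^2$ and density $P$, one has $J P(J)=-\Lambda^2 P'(J)$, so that $\mathbb{E}\left[J f(J)\right]=\Lambda^2\,\mathbb{E}\left[\partial_J f(J)\right]$ whenever the boundary terms vanish; this holds here because both $\tanh(\beta J)$ and the correlation functions are bounded. I would apply this identity once in $J_B$ and once in $J_C$ to each side of~(\ref{two-kitatani}). Throughout, $\langle\cdot\rangle$ abbreviates $\langle\cdot\rangle_{\{J_A\}}$.

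On the left-hand side the integrand $\tanh(\beta J_B)\tanh(\beta J_C)$ depends only on the two couplings $J_B,J_C$, which are independent of each other and of all remaining interactions, so the expectation factorizes. Using $\partial_{J_B}\tanh(\beta J_B)=\beta\left(1-\tanh^2(\beta J_B)\right)$ and the analogous identity in $J_C$, I get
\be
\mathbb{E}\left[J_B J_C\tanh(\beta J_B)\tanh(\beta J_C)\right]
&=&\beta^2\Lambda_B^2\Lambda_C^2\,\mathbb{E}\left[\left(1-\tanh^2(\beta J_B)\right)\left(1-\tanh^2(\beta J_C)\right)\right].
\ee
On the right-hand side I would integrate by parts first in $J_B$ and then in $J_C$, using the elementary thermal identities $\partial_{J_B}\langle\sigma_X\rangle=\beta\left(\langle\sigma_B\sigma_X\rangle-\langle\sigma_B\rangle\langle\sigma_X\rangle\right)$ together with the collapses $\sigma_B^2=\sigma_C^2=1$ (so that, for instance, $\langle\sigma_B\sigma_B\sigma_C\rangle=\langle\sigma_C\rangle$). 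A short computation reduces the two derivatives to
\be
\mathbb{E}\left[J_B J_C\langle\sigma_B\sigma_C\rangle\right]
&=&\beta^2\Lambda_B^2\Lambda_C^2\,\mathbb{E}\Big[1-\langle\sigma_B\rangle^2-\langle\sigma_C\rangle^2-\langle\sigma_B\sigma_C\rangle^2
\no\\
&&+2\langle\sigma_B\sigma_C\rangle\langle\sigma_B\rangle\langle\sigma_C\rangle\Big],
\ee
which is precisely the Gaussian expression already recorded below~(\ref{Griffiths-ex}) and identical to the bracket in~(\ref{general-ineq}).

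To finish, I would divide inequality~(\ref{two-kitatani}) through by the common positive factor $\beta^2\Lambda_B^2\Lambda_C^2$ and rewrite the resulting right-hand bracket by means of the connected correlation $\langle\sigma_B;\sigma_C\rangle=\langle\sigma_B\sigma_C\rangle-\langle\sigma_B\rangle\langle\sigma_C\rangle$. Expanding $\langle\sigma_B;\sigma_C\rangle^2$ and comparing with $(1-\langle\sigma_B\rangle^2)(1-\langle\sigma_C\rangle^2)$, the $\langle\sigma_B\rangle^2\langle\sigma_C\rangle^2$ terms cancel and one finds that the bracket equals $(1-\langle\sigma_B\rangle^2)(1-\langle\sigma_C\rangle^2)-\langle\sigma_B;\sigma_C\rangle^2$. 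Moving $\mathbb{E}\left[\langle\sigma_B;\sigma_C\rangle^2\right]$ to the other side then yields exactly the stated inequality.

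The main obstacle is purely the bookkeeping in the double integration by parts on the right-hand side: at each differentiation one must apply the $\sigma_B^2=\sigma_C^2=1$ reductions and retain every cross term, since it is the cancellation of the quartic pieces that produces the clean connected-correlation form. The hypothesis $\sigma_B\neq\sigma_C$ is inherited from inequality~(\ref{two-kitatani}); it guarantees that $J_B$ and $J_C$ are genuinely distinct, independent couplings, so that the factorization on the left-hand side and the two successive integrations by parts are legitimate.
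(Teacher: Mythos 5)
Your proposal is correct and follows exactly the route the paper intends: the corollary is obtained from Eq.~(\ref{two-kitatani}) by Gaussian integration by parts in $J_B$ and $J_C$ on both sides, reduction of the thermal derivatives via $\sigma_B^2=\sigma_C^2=1$, and the algebraic identity rewriting $1-\langle\sigma_B\rangle^2-\langle\sigma_C\rangle^2-\langle\sigma_B\sigma_C\rangle^2+2\langle\sigma_B\sigma_C\rangle\langle\sigma_B\rangle\langle\sigma_C\rangle$ as $(1-\langle\sigma_B\rangle^2)(1-\langle\sigma_C\rangle^2)-\langle\sigma_B;\sigma_C\rangle^2$. Both intermediate identities you record match the expressions already given in the paper, and the final rearrangement is exact, so no gap remains.
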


%%%%%%%%%%%%%%%%%%%%%%%%%%%%%%%%%%%%%%%%%%%%%%%%%%%%%%%%%%%%%%%%%%%%%%%%%%%
\subsection{Another generalization of Griffiths first inequality}
We have considered the quenched average of the local energy for a pair of spin sets.
Using the simple representation of correlation functions, we can find another generalization of the Griffiths first inequality (\ref{griffiths-1st}). 
Our result is as follows.
%%%%%%%%%%%%%%%%%%%%%%%%%%%%%%%%%%%%%
\begin{theorem}\label{th1}
For  $1\le x_1, x_2 \le1$ and any product of the Ising variables $\sigma_C$, the following relation hods,
\be
\mathbb{E}\left[ J_B \left(\langle\sigma_B \rangle_{\{J_{A}\}} +x_1\langle\sigma_C \rangle_{\{J_{A}\}} +x_2\langle\sigma_B \sigma_C \rangle_{\{J_{A}\}}  \right)\right] \ge0 . \label{general-Griffiths}
\ee
Similarly, for $-1\le x_1,x_2,x_3,x_4 \le1$ and any products of the Ising variables $\sigma_C$ and $\sigma_D$, the following inequality hods,
\be
\mathbb{E}\left[ J_B \left(2\langle\sigma_B \rangle_{\{J_{A}\}} +x_1\langle\sigma_C \rangle_{\{J_{A}\}} +x_2\langle\sigma_D \rangle_{\{J_{A}\}}+x_3\langle\sigma_B \sigma_C \rangle_{\{J_{A}\}}+x_4\langle\sigma_D \sigma_B \rangle_{\{J_{A}\}}  \right)\right] \ge0 .
\ee
Furthermore, for $-1\le x_1,x_2,x_3,x_4, x_5, x_6 \le1$ and any products of the Ising variables $\sigma_C$ and $\sigma_D$, the following relation hods,
\be
&&\mathbb{E}\left[ J_B \left(3\langle\sigma_B \rangle_{\{J_{A}\}} +x_1\langle\sigma_C \rangle_{\{J_{A}\}} +x_2\langle\sigma_D \rangle_{\{J_{A}\}}+x_3\langle\sigma_B \sigma_C \rangle_{\{J_{A}\}}  \right.\right.
\no\\
&&\left. \left.+x_4\langle\sigma_C \sigma_D \rangle_{\{J_{A}\}}+x_5\langle\sigma_D \sigma_B \rangle_{\{J_{A}\}} +x_6\langle\sigma_B \sigma_C\sigma_D  \rangle_{\{J_{A}\}}  \right)\right] 
\ge0 .
\ee
\end{theorem}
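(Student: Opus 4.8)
The plan is to prove all three inequalities by a single uniform argument that singles out the variable $z_B=\exp(\beta J_B)$ and exploits the symmetry of $P_B$, exactly in the spirit of the derivations of Eqs.~(\ref{simple-cal-1}) and (\ref{kitatani}). Unlike Theorems~\ref{th1} for a pair of spin sets, here only the one-variable representation is needed. First I would record the single-variable form of the simple representation of Sec.~II: writing $Z=x z_B+y z_B^{-1}$ with $x,y>0$, I decompose the unnormalized expectation of an arbitrary product of spins $\sigma_O$ according to the value $\sigma_B=\pm1$, obtaining $\mathrm{Tr}\,\sigma_O\exp(\beta H)=x'_O z_B+y'_O z_B^{-1}$. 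Since $\sigma_O\in\{-1,+1\}$, the coefficients satisfy $|x'_O|\le x$ and $|y'_O|\le y$. The crucial bookkeeping observation is that multiplying the observable by $\sigma_B$ flips the sign of the $\sigma_B=-1$ sector: $\langle\sigma_O\rangle_{\{J_A\}}$ carries numerator $x'_O z_B+y'_O z_B^{-1}$ while $\langle\sigma_B\sigma_O\rangle_{\{J_A\}}$ carries $x'_O z_B-y'_O z_B^{-1}$, both built from the same pair $(x'_O,y'_O)$.

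Next I would apply the symmetry trick: folding the $J_B$ integral onto $[0,\infty)$ via $J_B\to-J_B$ (equivalently $z_B\to z_B^{-1}$) turns $\mathbb{E}[J_B f]$ into $\int_0^\infty dJ_B\,P_B(J_B)\,J_B\,\mathbb{E}[f(z_B)-f(z_B^{-1})]'$. A direct computation of the difference for each building block gives, after setting $u=z_B^2$ and clearing denominators, the common positive factor $(z_B^4-1)/[(x z_B^2+y)(x+y z_B^2)]$ times a simple residue: $\langle\sigma_O\rangle_{\{J_A\}}$ contributes $x'_O y-x y'_O$, $\langle\sigma_B\sigma_O\rangle_{\{J_A\}}$ contributes $x'_O y+x y'_O$, and the lone $\langle\sigma_B\rangle_{\{J_A\}}$ contributes $2xy$. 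For $J_B>0$ we have $z_B>1$, so the prefactor is positive and the whole question reduces to the sign of the bracketed residue.

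The heart of the argument is then a clean cancellation. Writing $x'_O=\alpha_O x$ and $y'_O=\gamma_O y$ with $\alpha_O,\gamma_O\in[-1,1]$, each pair $\{x_i\langle\sigma_O\rangle_{\{J_A\}},\,x_j\langle\sigma_B\sigma_O\rangle_{\{J_A\}}\}$ contributes $xy\,[(x_i+x_j)\alpha_O+(x_j-x_i)\gamma_O]$ to the residue, which is bounded below by $-xy\,(|x_i+x_j|+|x_j-x_i|)=-2xy\,\max(|x_i|,|x_j|)\ge-2xy$ because $|x_i|,|x_j|\le1$. The leading term $n\langle\sigma_B\rangle_{\{J_A\}}$ supplies $+2nxy$. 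In each of the three displays the terms group precisely into such pairs $O\in\{C\}$, $\{C,D\}$, $\{C,D,CD\}$, and the coefficient of $\langle\sigma_B\rangle_{\{J_A\}}$ has been chosen to equal the number $n=1,2,3$ of pairs, so the positive constant $2nxy$ exactly absorbs the $n$ worst-case contributions $-2xy$. The residue is therefore nonnegative, the integrand is nonnegative, and each inequality follows.

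The routine part is the explicit difference computation of the individual building blocks, which mirrors Secs.~III.A--III.B; I expect the only genuinely delicate point to be the identity $|p+q|+|p-q|=2\max(|p|,|q|)$ combined with the matching of the leading coefficient of $\langle\sigma_B\rangle_{\{J_A\}}$ to the pair count, since this is precisely the structural reason the bounds hold with the stated sharp constants. (Throughout I read the range written as ``$1\le x_1,x_2\le1$'' in the first display as the intended $-1\le x_1,x_2\le1$.)
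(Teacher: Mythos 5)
Your proposal is correct, and it reaches the paper's intermediate formula by the same two core steps: the single-variable representation of the correlation functions in $z_B=\exp(\beta J_B)$ and the folding of the $J_B$-integral onto $[0,\infty)$ using $P_B(-J_B)=P_B(J_B)$, which isolates the common positive factor $(z_B^4-1)/[(xz_B^2+y)(x+yz_B^2)]$ times a residue. (Indeed, substituting $x=e+f$, $y=g+h$, $x'_C=e-f$, $y'_C=g-h$ into your residue $2xy+xy[(x_1+x_2)\alpha_C+(x_2-x_1)\gamma_C]$ reproduces exactly the paper's numerator $2\{e((1+x_2)g+(1+x_1)h)+f((1-x_1)g+(1-x_2)h)\}$.) Where you genuinely diverge is the final positivity step. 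The paper expands the residue in the joint-sector coefficients ($e,f,g,h$ for one extra observable, implicitly eight coefficients for two) and observes that every coefficient is a manifestly nonnegative combination such as $1\pm x_i$; it then only proves the first display explicitly and asserts the other two follow ``in the same manner,'' which would require increasingly bulky expansions. Your argument instead normalizes $x'_O=\alpha_O x$, $y'_O=\gamma_O y$ with $\alpha_O,\gamma_O\in[-1,1]$, pairs each $x_i\langle\sigma_O\rangle$ with $x_j\langle\sigma_B\sigma_O\rangle$, and bounds the pair's contribution below by $-xy(|x_i+x_j|+|x_j-x_i|)=-2xy\max(|x_i|,|x_j|)\ge -2xy$, so that the $2nxy$ supplied by $n\langle\sigma_B\rangle$ absorbs the $n$ worst cases. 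This buys a uniform proof of all three displays at once and, more importantly, a structural explanation of the constants $1,2,3$ as pair counts (and shows they are sharp), at the cost of being slightly more abstract than the paper's term-by-term verification. Your reading of the typo ``$1\le x_1,x_2\le 1$'' as $-1\le x_1,x_2\le 1$ matches what the paper's own proof actually uses.
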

%%%%%%%%%%%%%%%%%%%%%%%%%%%%%%%%%%%%%%%%%%%%%%%%%%%%%%%%%%%%%%%%%%%%%%%%%%%
\begin{proof}
{\rm
When we only focus on $z_B=\exp(\beta  J_B)$, for any product of the Ising variables $\sigma_C$, we can express $\langle\sigma_B \rangle_{\{J_{A}\}}$, $\langle\sigma_C \rangle_{\{J_{A}\}}$ and $\langle\sigma_B \sigma_C \rangle_{\{J_{A}\}}$ as
\be
\langle  \sigma_B   \rangle_{\{J_{A}\}}&=&  \frac{e z_B  +f z_B -g z_B^{-1}-h z_B^{-1}  }{e z_B  +f z_B  +g z_B^{-1} +h z_B^{-1}   } ,
\\
\langle  \sigma_C   \rangle_{\{J_{A}\}}&=&  \frac{e z_B  -f z_B +g z_B^{-1}-h z_B^{-1}  }{e z_B  +f z_B  +g z_B^{-1} +h z_B^{-1}   }  ,
\\
\langle  \sigma_B  \sigma_C \rangle_{\{J_{A}\}}&=&\frac{e z_B  -f z_B -g z_B^{-1}+h z_B^{-1}  }{e z_B  +f z_B  +g z_B^{-1} +h z_B^{-1}   }  ,
\ee
where $e$, $f$, $g$ and $h$ are positive and don't contain $z_B$.
Then, by the same method as before, the direct calculation shows 
\be
&&\mathbb{E}\left[ J_B \left(\langle\sigma_B \rangle_{\{J_{A}\}} +x_1\langle\sigma_C \rangle_{\{J_{A}\}} +x_2\langle\sigma_B \sigma_C \rangle_{\{J_{A}\}}  \right)\right] 
\no\\
&=&\int_0^{\infty} dJ_B P_B(J_B)  J_B \mathbb{E} \left[  \frac{2 (z_B^4-1)\left\{e((1+x_2)g+(1+x_1)h)+f((1-x_1)g+h(1-x_2)) \right\}}{(g+h+(e+f)z_B^2)((g+h)z_B^2+e+f)} \right]' 
\no\\
&\ge&0.
\ee
Thus, we prove Eq. (\ref{general-Griffiths}).
Other inequalities can be also proved by the same manner.
}
\end{proof}

%%%%%%%%%%%%%%%%%%%%%%%%%%%%%%%%%%%%%%%%%%%%%%%%%%%%%%%%%%%%%%%%%%%%%%%%%%%
\subsection{Possible inequalities}
From the above results, it is expected that the following quantities are positive,
\be
&&\mathbb{E}\left[ J_B J_C  \langle\sigma_B\rangle_{\{J_{A}\}} \langle \sigma_C  \rangle_{\{J_{A}\}} \right],
\\
&&\mathbb{E}\left[ J_B J_C J_D  \langle\sigma_B \sigma_C \sigma_D \rangle_{\{J_{A}\}} \right], \label{Griffiths-3body}
\\
&&\mathbb{E} \left[ J_B J_C J_D \tanh(\beta J_B)\tanh(\beta J_C)\tanh(\beta J_D) - J_B J_C J_D \langle  \sigma_B  \sigma_C \sigma_D \rangle_{\{J_{A}\}} \right] . \label{Kitatani-3body}
\ee
However, numerical calculations imply that the above quantities do not have a definite sign.

On the other hand, numerical calculation suggests that the following quantity is positive,
\be
&&\mathbb{E} \left[ J_B J_C \tanh(\beta J_B)\tanh(\beta J_C)- J_B J_C  \langle  \sigma_B\rangle_{\{J_{A}\}} \langle  \sigma_C  \rangle_{\{J_{A}\}} \right] ,
\ee
but we have not found a general proof or counter example.
%%%%%%%%%%%%%%%%%%%%%%%%%%%%%%%%%%%%%%%%%%%%%%%%%%%%
%%%%%%%%%%%%%%%%%%%%%%%%%%%%%%%%%%%%%%%%%%%%%%%%%%%%
\section{Correlation inequalities for asymmetric distribution}
In this section, we consider the case that the distribution functions of $J_B$ and $J_C$ have a ferromagnetic bias,
\be
P_B(-J_B)&=&\exp(-2\beta_{NL,B} J_B) P_B(J_B), \label{ferro-bias}
\\
P_C(-J_C)&=&\exp(-2\beta_{NL,C} J_C) P_C(J_C),\label{ferro-bias2}
\ee
where $\beta_{NL,B} $ and $\beta_{NL,C}$ are positive.
For example, in the case of the Gaussian distribution
\be
P_B(J_B)&=&\frac{1}{\sqrt{2\pi J^2}}\exp\left(-\frac{(J_B-J_0)^2}{2J^2} \right),
\ee 
and the binary distribution
\be
P_B(J_B)&=&p\delta(J_B-J)+(1-p)\delta(J_B+J) ,
\ee
$\beta_{NL,B} $ is given as follows, respectively,
\be
\beta_{NL,B} &=& \frac{J_0}{J^2} ,
\\
\beta_{NL,B} &=&\frac{p}{1-p}.
\ee

%%%%%%%%%%%%%%%%%%%%%%%%%%%%%%%%%%%%%%%%%%%%%%%%%%%%
%%%%%%%%%%%%%%%%%%%%%%%%%%%%%%%%%%%%%%%%%%%%%%%%%%%%
%\subsection{Inequality for one-body correlation}
First, we focus on the quenched average of the correlation function $\mathbb{E} \left[ \langle  \sigma_B \rangle_{\{J_{A}\}} \right] $ .
We find the following result.
%%%%%%%%%%%%%%%%%%%%%%%%%%%%%%%%%%%%%%%%%%%%%%%%%%%%
\begin{theorem}\label{}
Under the condition (\ref{ferro-bias}), the quenched average of the correlation function satisfies
\be
\mathbb{E} \left[ \langle  \sigma_B \rangle_{\{J_{A}\}} \right] \ge \mathbb{E} \left[ \exp(-2\beta_{B} J_B) \langle  \sigma_B \rangle_{\{J_{A}\}} \right]. \label{bias-ineq}
\ee
\end{theorem}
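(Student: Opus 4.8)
The plan is to mirror the rederivation of the Griffiths first inequality that leads to (\ref{simple-cal-1}), but to use the asymmetry relation (\ref{ferro-bias}) to fold the negative part of the $J_B$-integration onto the positive part. First I would write both sides using the single-variable representation $\langle\sigma_B\rangle_{\{J_A\}} = (x z_B - y z_B^{-1})/(x z_B + y z_B^{-1})$ with $z_B=\exp(\beta J_B)$ and $x,y>0$ independent of $z_B$, expressing $\mathbb{E}[\langle\sigma_B\rangle_{\{J_A\}}]$ and $\mathbb{E}[\exp(-2\beta_{NL,B}J_B)\langle\sigma_B\rangle_{\{J_A\}}]$ as $\int_{-\infty}^{\infty} dJ_B\, P_B(J_B)$ integrated against the primed average $\mathbb{E}[\cdots]'$ over the remaining couplings (here the exponent $\beta_B$ in the statement is to be read as the bias parameter $\beta_{NL,B}$ of (\ref{ferro-bias})).

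Next I would split each integral as $\int_{-\infty}^{\infty}=\int_0^{\infty}+\int_{-\infty}^0$ and substitute $J_B\to -J_B$ in the negative piece. Under this substitution $z_B\to z_B^{-1}$, so $\langle\sigma_B\rangle_{\{J_A\}}$ becomes $(x z_B^{-1}-y z_B)/(x z_B^{-1}+y z_B)$, while the weight transforms by (\ref{ferro-bias}) as $P_B(-J_B)=\exp(-2\beta_{NL,B}J_B)P_B(J_B)$. The key bookkeeping point is that the folded negative region contributes a bare factor $\exp(-2\beta_{NL,B}J_B)$ in $\mathbb{E}[\langle\sigma_B\rangle_{\{J_A\}}]$, whereas in $\mathbb{E}[\exp(-2\beta_{NL,B}J_B)\langle\sigma_B\rangle_{\{J_A\}}]$ the inserted exponential and the weight's exponential cancel to leave $1$; symmetrically, the positive region carries a factor $1$ in the first average and $\exp(-2\beta_{NL,B}J_B)$ in the second.

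Subtracting, all four pieces combine into a single integral over $J_B\in[0,\infty)$ with the common prefactor $1-\exp(-2\beta_{NL,B}J_B)$ multiplying the primed average of
\[
\frac{x z_B^2 - y}{x z_B^2 + y} - \frac{x - y z_B^2}{x + y z_B^2}
= \frac{2 x y (z_B^4 - 1)}{(x z_B^2 + y)(x + y z_B^2)},
\]
which is exactly the non-negative integrand already met in (\ref{simple-cal-1}). For $J_B\ge 0$ one has $1-\exp(-2\beta_{NL,B}J_B)\ge 0$ since $\beta_{NL,B}>0$, and $z_B^4-1\ge 0$; since $x,y>0$ the quantity inside $\mathbb{E}[\cdots]'$ is non-negative configuration by configuration, so the whole integrand is manifestly non-negative and the claim follows.

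The hard part is purely the bookkeeping: one must track how the weight $P_B$, the inserted factor $\exp(-2\beta_{NL,B}J_B)$, and the correlation function each behave under $J_B\to -J_B$, and verify that the exponentials recombine precisely into the single prefactor $1-\exp(-2\beta_{NL,B}J_B)$ rather than leaving a sign-indefinite residual weight. This cancellation is special to the case in which the exponent of the insertion equals the bias parameter $\beta_{NL,B}$; it is exactly what lets the negative region cancel against the positive region and leaves a single positive integrand, so that no structural input beyond the representation of $\langle\sigma_B\rangle_{\{J_A\}}$ and the positivity of $x,y$ is required.
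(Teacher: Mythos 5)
Your proposal is correct and follows essentially the same route as the paper: subtract the two sides, fold the negative-$J_B$ half of the integral onto the positive half via the asymmetry relation (\ref{ferro-bias}), and observe that the result is $\int_0^\infty dJ_B\,P_B(J_B)\,(1-\e^{-2\beta_{B}J_B})\,\mathbb{E}\bigl[2xy(z_B^4-1)/((xz_B^2+y)(x+yz_B^2))\bigr]'$, which is manifestly non-negative. The only cosmetic difference is that you track the four pieces separately before recombining, whereas the paper first merges the two expectations into $\mathbb{E}[(1-\e^{-2\beta_B J_B})\langle\sigma_B\rangle_{\{J_A\}}]$ and then folds; the resulting integrand is identical.
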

%%%%%%%%%%%%%%%%%%%%%%%%%%%%%%%%%%%%%%%%%%%%%%%%%%%%
\begin{proof}
{\rm
A straightforward calculation gives
\be
\mathbb{E} \left[ (1-\exp(-2\beta_{B} J_B)) \langle  \sigma_B \rangle_{\{J_{A}\}} \right]
&=& \int_0^{\infty} dJ_B P_B(J_B) (1-\exp(-2\beta_{B} J_B)) \mathbb{E} \left[  \frac{2xy (z_B^4-1) }{(x z_B^2 +y)(x +y z_B^2)} \right]' 
\no\\
&\ge&0 .
\ee
}
\end{proof}
%%%%%%%%%%%%%%%%%%%%%%%%%%%%%%%%%%%%%%%%%%%%%%%%%%%%
We note that $P_B(-J_B)=e^{-2\beta_{B} J_B} P_B(J_B)$.
Thus, this inequality means that the correlation function with a ferromagnetic bias is always larger than or equal to the one with an antiferromagnetic bias, independent of any other interaction.
This is a natural consequence and it is easy to prove the above inequality for the Gaussian distribution by differentiation; it is not trivial for general distributions such as the binary distribution.

The essence of the proof of Eq. (\ref{bias-ineq}) is to attribute a biased quantity to the quantity calculated in Sec. III.
Thus, following the same manner, we can easily find similar correlation inequalities.

%%%%%%%%%%%%%%%%%%%%%%%%%%%%%%%%%%%%%%%%%%%%%%%%%%%%
%%%%%%%%%%%%%%%%%%%%%%%%%%%%%%%%%%%%%%%%%%%%%%%%%%%%
%\subsection{Inequality for one-body local energy}
%%%%%%%%%%%%%%%%%%%%%%%%%%%%%%%%%%%%%%%%%%%%%%%%%%%%
\begin{theorem}\label{}
Under the conditions (\ref{ferro-bias}) and (\ref{ferro-bias2}), the following inequalities hold,
\be
&&\mathbb{E} \left[ J_B(1+\exp(-2\beta_{B} J_B))\langle  \sigma_B \rangle_{\{J_{A}\}} \right] \ge 0 ,
\\
&&\mathbb{E} \left[ \left(1-\exp(-2\beta_{B} J_B)\right)\left(1-\exp(-2\beta_{C} J_C)  \right)  \langle \sigma_B \sigma_C \rangle_{\{J_{A}\}} \right]\ge0,
\\
&& \mathbb{E} \left[ J_B J_C \left(1+\exp(-2\beta_{B} J_B)\right)\left(1+\exp(-2\beta_{C} J_C)  \right)  \langle \sigma_B \sigma_C \rangle_{\{J_{A}\}} \right]\ge0.
\ee
\end{theorem}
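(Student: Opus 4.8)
The plan is to handle all three inequalities by the single mechanism advertised just before the theorem: rewrite each biased quenched average as an integral over the positive half-axis (or the positive quadrant) against $P_B$ and $P_C$, absorb the bias factor $\exp(-2\beta_B J_B)$ using the defining relation (\ref{ferro-bias}), and then recognize the resulting integrand as one of the manifestly nonnegative objects already produced in Sec. III. Concretely, for a prefactor $g(J_B)$ I would split $\int_{-\infty}^{\infty}dJ_B\,P_B(J_B)\,g(J_B)(\cdots)$ into $\int_0^\infty$ and $\int_{-\infty}^0$, substitute $J_B\to-J_B$ in the second piece so that $z_B=\exp(\beta J_B)\to z_B^{-1}$, and replace $P_B(-J_B)$ by $\exp(-2\beta_B J_B)P_B(J_B)$; the two-variable statements require the same reflection in $J_C$ as well.

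First I would dispose of the one-variable inequality. With $g(J_B)=J_B\left(1+\exp(-2\beta_B J_B)\right)$ the decisive elementary identity is $\exp(-2\beta_B J_B)\,g(-J_B)=-g(J_B)$, so after the reflection the two half-axis contributions combine into
\be
\mathbb{E}\left[J_B\left(1+\exp(-2\beta_B J_B)\right)\langle\sigma_B\rangle_{\{J_{A}\}}\right]
&=&\int_0^\infty dJ_B\,P_B(J_B)\,J_B\left(1+\exp(-2\beta_B J_B)\right)\,\mathbb{E}'\!\left[\frac{2xy(z_B^4-1)}{(xz_B^2+y)(x+yz_B^2)}\right].
\ee
This is precisely the integrand of Eq. (\ref{simple-cal-1}), now carrying the extra nonnegative factor $1+\exp(-2\beta_B J_B)$; since $J_B\ge0$, $z_B\ge1$ and $x,y>0$ on the domain, nonnegativity is immediate.

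For the two remaining statements I would perform the analogous split in both $J_B$ and $J_C$, generating four sign sectors. In each case the prefactor is engineered so that the bias factor cancels against the reflected prefactor up to a sign: one checks $\exp(-2\beta J)\,g(-J)=-g(J)$ both for $g(J)=1-\exp(-2\beta J)$ and for $g(J)=J\left(1+\exp(-2\beta J)\right)$. The four sectors therefore assemble, with signs $+,-,-,+$, into
\be
&&g_B(J_B)\,g_C(J_C)\bigl[F(z_B,z_C)-F(z_B,z_C^{-1})-F(z_B^{-1},z_C)+F(z_B^{-1},z_C^{-1})\bigr] ,
\ee
integrated over $[0,\infty)^2$ against $P_BP_C$, where $F\equiv\langle\sigma_B\sigma_C\rangle_{\{J_{A}\}}$. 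This bracket is exactly the object evaluated in the proof of Eq. (\ref{Griffiths-ex}): it equals $2(z_B^4-1)(z_C^4-1)$ times the positive rational function of $z_B^2,z_C^2,a,b,c,d$ displayed in Eq. (\ref{simple-cal-2}). Because $g_B,g_C\ge0$ and $z_B,z_C\ge1$ on the positive quadrant, both the second and third inequalities follow at once, the only distinction between them being whether $g=1-\exp(-2\beta J)$ or $g=J\left(1+\exp(-2\beta J)\right)$.

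The genuinely nontrivial input is thus not new: it is the factorization of the double difference of $\langle\sigma_B\sigma_C\rangle$ into $2(z_B^4-1)(z_C^4-1)$ times a sum of manifestly positive monomials in $a,b,c,d$, which is exactly what the proof of Eq. (\ref{Griffiths-ex}) already supplies. The step I expect to require the most care is the bookkeeping of the reflection, namely verifying the sign identities $\exp(-2\beta J)\,g(-J)=-g(J)$ for each chosen prefactor, since it is this cancellation that collapses the biased average onto the unbiased single or double difference of Sec. III and lets the established positivity be invoked verbatim. Once that cancellation is confirmed, no further computation is needed.
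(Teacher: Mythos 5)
Your proposal is correct and is precisely the argument the paper intends: the paper omits an explicit proof, stating only that these inequalities follow "in the same manner" as Eq.~(\ref{bias-ineq}), namely by reflecting the negative half-axes, absorbing the bias via $P(-J)=e^{-2\beta_{NL}J}P(J)$ through the sign identity $e^{-2\beta J}g(-J)=-g(J)$ for $g(J)=1-e^{-2\beta J}$ and $g(J)=J(1+e^{-2\beta J})$, and then invoking the nonnegative single and double differences already computed in Eqs.~(\ref{simple-cal-1}) and (\ref{simple-cal-2}). Your reconstruction matches this in every detail, so no further comment is needed.
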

\section{ Conclusions}
We have obtained some correlation inequalities for the Ising models with quenched randomness.
Our main inequalities (\ref{Griffiths-ex}) and (\ref{two-kitatani})  are natural extension of previous studies~\cite{CL,KNA} to the local energy for a pair of spin sets.
On the other hand, numerical calculation implied that similar inequalities do not hold in the local energy for a group of spin sets as in Eq. (\ref{Griffiths-3body}) and (\ref{Kitatani-3body}).

In addition, using the calculations of the symmetric distribution, we attain some correlation inequalities for the asymmetric distribution.

Our proof relied strongly on the simple expression of correlation functions.
For higher order correlations, it is hard to find similar inequalities because the expansion terms increase exponentially.
Thus, in order to search for further correlation inequalities for spin-glass models, it is necessary to invent an efficient and systematic method.

%%%%%%%%%%%%%%%%%%%%%%%%%%%%%%%%%%%%%%%%%%%%%%%%%%%%%%%%%%%%%%%%%
\begin{acknowledgment}
The present work was financially supported by JSPS KAKENHI Grant No. 18H03303, 19H01095, 19K23418, and the JST-CREST (No.JPMJCR1402) for Japan Science and Technology Agency.
\end{acknowledgment}

%%%%%%%%%%%%%%%%%%%%%%%%%%%%%%%%%%%%%%%%%%%%%%%%%%%%%%%%%%%%%%%%%
\appendix
\section{Proof of Eq. (\ref{general-ineq})}\label{appendix}
For the Ising Hamiltonian (\ref{def-model}), arbitrary correlation functions $\langle \sigma_B \sigma_C \rangle,\langle \sigma_B \rangle,$ and $\langle\sigma_C \rangle$ can be generally represented as
\be
\langle   \sigma_B \sigma_C \rangle&=&\frac{a-b-c+d}{a+b+c+d},
\\
\langle   \sigma_B \rangle&=&\frac{a+b-c-d}{a+b+c+d},
\\
\langle   \sigma_C \rangle&=&\frac{a-b+c-d}{a+b+c+d},
\ee
where $a$, $b$, $c$ and $d$ are always positive and depend on the model details, and $\sigma_B$ and $\sigma_C$ are the product of arbitrary Ising variables,
\be
\sigma_B&\equiv& \prod_{i\in B} \sigma_i ,
\\
\sigma_C&\equiv& \prod_{i\in C} \sigma_i .
\ee
We note that $\sigma_B$ and $\sigma_C$ take only $\pm1$, which allows for the expression of the above equations.
Then, a straightforward calculation shows 
\be
&&1- \langle   \sigma_C \rangle^2 - \langle  \sigma_B  \rangle^2 -\langle  \sigma_B  \sigma_C \rangle^2+ 2\langle  \sigma_B  \sigma_C \rangle \langle  \sigma_B  \rangle \langle  \sigma_C \rangle 
\no\\
&=&16\frac{acd+bcd+abc+abd}{(a+b+c+d)^3}
\no\\
&\ge&0.
\ee
Thus, we prove Eq. (\ref{general-ineq})

%%%%%%%%%%%%%%%%%%%%%%%%%%%%%%%%%%%%%%%%%%%%%%%%%%%%%%%%%%%%%%%%%%%%%%%%%%%%
%%%%%%%%%%%%%%%%%%%%%%%%%%%%%%%%%%%%%%%%%%%%%%%%%%%%%%%%%%%%%%%%%%%%%%%%%%%%


\begin{thebibliography}{99}

\bibitem{Parisi}
G. Parisi,
{Infinite number of order parameters for spin-glasses},
Phys. Rev. Lett. {\bf43}, 1754 (1979).

\bibitem{Parisi2}
G. Parisi,
{A sequence approximate solutions to the S-K model for spin glasses},
J. Phys. A {\bf13}, L115 (1980).

\bibitem{Guerra}
F. Guerra,
{Broken replica symmetry bounds in the mean field spin glass model},
Comm. Math. Phys. {\bf233}, 1 (2003)

\bibitem{Talagrand}
M Talagrand,
{The Parisi formula},
Ann. of Math. {\bf163}, 221 (2006).

\bibitem{Nishimori}
H. Nishimori,
{Internal Energy, Specific Heat and Correlation Function of the Bond-Random Ising Model},
Prog. Theor. Phys. {\bf66}, 1169 (1981).


\bibitem{KTZ}
F. Krzakala, F. Ricci-Tersenghi, and L. Zdeborov\'{a},
{Elusive Spin-Glass Phase in the Random Field Ising model},
Phys. Rev. Lett. {\bf104,} 207208 (2010).

\bibitem{CG}
P. Contucci and S. Graffi,
{Monotonicity and Thermodynamic Limit for Short Range Disordered Models},
J. Stat. Phys. {\bf115,} 581 (2004).

\bibitem{CL}
P. Contucci and J. Lebowitz,
{Correlation Inequalities for Spin Glasses},
Ann. Henri Poincare {\bf8}, 1461(2007).

\bibitem{MNC}
S. Morita, H. Nishimori, and P. Contucci,
{Griffiths inequalities for the Gaussian spin glass},
J. Phys. A {\bf37,} L203 (2004).

\bibitem{Kitatani}
H. Kitatani,
{Griffiths Inequalities for Ising Spin Glasses on the Nishimori Line},
J. Phys. Soc. Jpn. {\bf78,} 044714 (2009).

\bibitem{KNA}
H. Kitatani, H. Nishimori, and A. Aoki,
{Inequalities for the Local Energy of Random Ising Models},
J. Phys. Soc. Jpn. {\bf76,} 074711 (2007).

\bibitem{OO}
M. Okuyama and M. Ohzeki,
{Inequality for local energy of Ising models with quenched randomness and its application},
arXiv:2001.10707 (2020).

\bibitem{CS}
P. Contucci and S. Starr,
{Thermodynamic Limit for Spin Glasses. Beyond the Annealed Bound},
J. Stat. Phys. {\bf135,} 1159 (2006).

\bibitem{CG2}
P. Contucci and C. Giardina,
{\it Perspective on Spin Glasses},
(Cambridge University, Cambridge, 2013).


\end{thebibliography}
\end{document}